
\documentclass[11pt,a4paper]{article}
\usepackage{latexsym}
\usepackage{amsfonts}
\usepackage[mathscr]{eucal}
\usepackage{epsfig}
\usepackage{a4wide}
\usepackage{ifthen}
\usepackage{comment}

\usepackage{amsthm,amsmath,natbib}

\usepackage{amsmath, amsthm}

\def\startlocaldefs{\makeatletter}
\startlocaldefs

\def\bysame{\@ifnextchar\bgroup{\@bysame}{\@bysame{}}}
\def\@bysame#1{\vrule height 1.5pt depth -1pt width 3em \hskip
0.5em\relax}

\frenchspacing

\parindent0pt
\def\@copyright{}
\setlength{\parskip}{5pt plus 2pt minus 1pt}

\newcommand{\UCL}{\ensuremath{\operatorname{UCL}}}
\newcommand{\LCL}{\ensuremath{\operatorname{LCL}}}

\newcommand{\ve}{\varepsilon}
\newcommand{\eins}{\mathbf{1}}
\newcommand{\N}{\mathbb{N}}
\newcommand{\R}{\mathbb{R}}
\newcommand{\calF}{\mathcal{F}}
\newcommand{\calI}{\mathcal{I}}
\newcommand{\calJ}{\mathcal{J}}
\newcommand{\calN}{\mathcal{N}}
\newcommand{\calT}{\mathcal{T}}
\newcommand{\calV}{\mathcal{V}}
\newcommand{\Cov}{\operatorname{Cov}}

\newcommand{\E}{\mathbb{E}}
\newcommand{\Var}{\operatorname{Var\,}}
\newcommand{\PP}{\mathbb{P}}

\newcommand{\trunc}[1]{\lfloor #1 \rfloor}

\newtheorem{theorem}{Theorem}[section]

\newtheorem{example}{Example}[section]

\newtheorem{remark}{Remark}[section]


\excludecomment{details}

\begin{document}



\begin{center}
  \Large
  A BINARY CONTROL CHART TO DETECT SMALL JUMPS
\end{center}
\vskip 1cm
\begin{center}
Ewaryst Rafaj\l owicz\\
Institute of Computer Engineering, Control and Robotics\\
Wroc\l aw University of Technology, Poland
\end{center}
\begin{center}
  Ansgar Steland\footnote{Address of correspondence: Prof. Dr. A. Steland, RWTH Aachen University, Institute of Statistics, W\"ullnerstr. 3,
  D-52056 Aachen, Germany.}\\
  Institute of Statistics\\
  RWTH Aachen University, Germany\\ \ \\
  July 16th, 2008

\end{center}

\begin{abstract}
The classic $N \,p$ chart gives a signal if the number of
successes in a sequence of independent binary variables exceeds a
control limit. Motivated by engineering applications in industrial
image processing and, to some extent, financial statistics, we
study a simple modification of this chart, which uses only the
most recent observations. Our aim is to construct a control chart
for detecting a shift of an unknown size, allowing for an unknown
distribution of the error terms. Simulation studies indicate that
the proposed chart is superior in terms of out-of-control average
run length, when one is interest in the detection of very small
shifts. We provide a (functional) central limit theorem under a
change-point model with local alternatives which explains that
unexpected and interesting behavior. Since real observations are
often not independent, the question arises whether these results
still hold true for the dependent case. Indeed, our asymptotic
results work under the fairly general condition that the
observations form a martingale difference array. This enlarges the
applicability of our results considerably, firstly, to a large
class time series models, and, secondly, to
locally dependent image data, as we demonstrate by an example.  \\ \ \\
{\em MSC 2000}: Primary 62L10, 60F17, 62G20; Secondary 62P30,
68U10, 62P05.
\end{abstract}


\section{Introduction} Detection of changes in the mean
characteristic of produced items is still the most frequently used
tool in quality control. A large variety of control charts have
been proposed in the last fifty years. For comprehensive reviews
we refer to \cite{AnHu02a}, \cite{AnHu02b}, the monograph
\cite{BD}, and also to the articles \cite{AttribChRev},
\cite{nonparchartrev}, and \cite{Montgomery}. Investigations of
their properties indicate that one can not hope to select one
"universally good" chart, which is uniformly sensitive to small,
moderate and large shifts in the mean and still robust against
violating the normality of errors assumption. On the other hand, a
wide accessability of computer systems allows to run
simultaneously several control charts with different sensitivity
ranges for the same process. It is well known, the Shewart chart
is well tuned to detect rather quickly large shifts, while EWMA
and CUSUM charts are faster in detecting smaller shifts of the
order $0.5\sigma$. If the aim is to detect moderate to large jumps
so called jump-preserving procedures are attractive, which are
special cases of the unifying vertically weighted regression
approach studied by \cite{VWRRaf99}, \cite{Ste05ClipMed}, and
\cite{VWR_PRS04, VWR_PRS08}. Nonparametric kernel control charts
and the optimization for certain out-of-control models covering
mixing processes have been studied in \cite{Ste04Metrika} and
\cite{Ste05KernelSmoothers}. Further, \cite{smallshift} combined a
classic Shewhart chart and a conforming run length chart yielding
smaller ARLs for shifts larger than $ 0.8 \sigma $, but that
method is inferior to the EWMA chart for smaller shifts.

The purpose of this paper is to propose a new binary chart, which
is easy to apply, has enlarged sensitivity to very small shifts,
and is robust with respect to deviations from normality. We
provide a comprehensive study covering the methodology, asymptotic
theory, practical issues of control chart design, and extensive
Monte Carlo simulations.

Our study is motivated as follows: Although computing power has
considerably increased, many practical applications still require
detection procedures which are extremely fast to calculate. An
example, which motivated our investigation, is the surveillance of
copper production as outlined in \cite{VWR_PRS08}. Here the
problem is to detect defects and cracks resulting in lower
quality. The copper is surveyed by a camera taking many
high-resolution images per second, and each column of an image is
analyzed in real time to detect defects. Only detectors which are
fast enough to calculate can be employed. In such engineering
image processing and image analysis applications one has to deal
with the spatial inhomogeneity of the grey level of pixels. One
can either assume that the inhomogeneity is compensated by a quite
wiggly mean function which is disturbed by independent noise, or
assume a smooth mean function overlayed by dependent noise. In the
latter case fitting complex models to take account of dependencies
is often not feasible in real-time applications. Then it is
important to know how the chosen method behaves for dependent
data. Let us also mention a further important area, namely the
application of monitoring procedures to financial data. In
financial statistics various empirical analyzes have revealed that
asset returns are usually uncorrelated but the squares are
serially correlated and are affected by conditional
heteroscedasticity which produces the clusters of strongly
dispersed returns seen in real data. Various models for returns
assume or imply the martingale difference property.

Having in mind the above applications, we propose a simple method
where one thresholds the observations to obtain binary data and
applies a control chart based on the number of data points exceeding
the threshold. In contrast to the classic $N \, p$-chart, the chart
uses a finite {\em buffer} storing only the most recent
observations. Our simulation results indicate that such a modified
$p$-chart with a reduced number of observations reacts on average
slower than several control charts studied recently in
\cite{AnnStat2004} for shifts larger than $ 0.25 \sigma $, but
provides faster detection for very small shifts.

We provide an appropriate theoretical framework and prove a
functional central limit theorem which shows that the classic $ N \,
p $ chart's sensitivity with respect to very small shifts indeed can
be improved by taking less observations into account. As argued
above, the question arises, whether the result still holds true when
the independence assumption underlying the classic $p$ chart is
dropped. The answer is positive: Our main result and its
interpretation holds true for a large class of dependent processes,
namely the class of triangular arrays of random variables forming a
martingale difference array with respect to some filtration. Thus,
the benefits of the modified $p$ chart are also effective when
monitoring dependent data.

The paper is organized as follows. In Section~\ref{TheChart}, we
introduce the proposed control chart and its relationship to the
classic $N \, p$-chart. An appropriate change-point model with
local alternatives is introduced in Section~\ref{CPM} to study the
problem from an asymptotic viewpoint. We establish a functional
central limit theorem for the underlying stochastic process which
induces the stopping time of interest. A proof of the main result
is postponed to an appendix. Practical issues of control chart
design are discussed in detail in Section~\ref{PracticalIssues}.
Finally, an extensive Monte Carlo study is presented in
Section~\ref{Sim} providing a comparison with recently proposed
control charts.

\section{Statistical model and a modified $p$-chart}
\label{TheChart}

Our aim is to construct a control chart for detecting a shift of
an unknown size $m$ allowing for an unknown distribution $F$ of
the error terms. It is required that the in-control average run
length (in-control ARL) of the chart can be tuned to sufficiently
large values in order to reduce the number of false alarms.
Simultaneously, the out-of-control ARL should be small, leading to
quick detection of the jump after its occurrence. For a discussion
of the design of control limits and their relationship to alarm
rates and ARLs we refer to \cite{AlarmRates}.

Even if the underlying distribution is normal, the Shewhart
control chart is not powerful for detecting small changes, say $m$
of the order of $ 0.1 \sigma $ to $ 0.25 \sigma $, if $ \sigma $
denotes the standard deviation of the errors. The EWMA
(exponentially weighted moving average) control chart is better
suited to this purpose, but its performance is still not
satisfactory in the range of very small shifts. For this reason a
number of modifications of the Shewhart, EWMA, and CUSUM charts
have been proposed recently (see \cite{AnnStat2004} and the
bibliography cited therein). However, the design of a concrete
control procedure with specific properties requires knowledge of
the error distribution.

\subsection{Change-point model}

In this paper, we consider a classic change-point model, where the
observations are of the form
\begin{equation}\label{obs}
    Y_n\: = Y+\: m\cdot \mathbf{1}(n-q) + \ve_n,\quad
    n\: =\: 1,2, \ldots
\end{equation}
$Y$ denotes the desired level of quality (target value) which is
disturbed by random errors $\ve_n$'s. The deterioration of quality
is modelled by jump (permanent shift in the quality
characteristic) of height $m \neq 0$, which appears at time
instant $q>0$. $q$ is called {\em change-point} and is assumed to
be non-stochastic but unknown. $\mathbf{1}(t)$ denotes the
indicator function on the set $ [0,\infty) $, i.e.,
\begin{equation}\label{jump}
    \mathbf{1}(t)\: =\: \left\{
    \begin{array}{ccc}
      0 & if & t<0\\
       1 & if & t\geq 0    \\
    \end{array} \right. .
\end{equation}
Thus, starting at the change-point $q$ there is a jump of height $ m
$. In Section~\ref{CPM} we consider a change-point model allowing
for jump sizes tending to $0$ at a certain rate.

To simplify the exposition, we shall assume $ Y = 0 $ in what
follows. For the same reason, let us tentatively assume that the
error terms  $\ve_n$ in \eqref{obs} are independent and identically
distributed random variables. That assumption will be relaxed in the
next section. Whereas classic procedures are restricted to normally
distributed noise, we allow for arbitrary distribution functions $F$
which are symmetric about $0$, i.e.,
\begin{equation}\label{sym}
  F(x)\: =\: 1-F(-x),\quad x\in \R.
\end{equation}
Particularly, we allow for distributions having no finite
expectations, e.g., the Cauchy distribution which has heavier
tails than the normal distribution, or the Laplace (double
exponential) law with lighter tails. Note that we do not require
the error terms to possess a density $f$, but if they do,
\eqref{sym} implies $ f(x)=f(-x) $.

\subsection{The binary control chart revisited}

The classic nonparametric $N \, p$-chart is distribution-free under
quite general assumptions, and therefore is applicable when the
error distribution is unknown. Although we confine our discussion to
the case that the change from the in-control to the out-of-control
scenario is given by a sharp jump, our approach can also be used for
more general scenarios, because the construction of the control
chart does not require knowledge of the underlying error
distribution. As we shall see below, the chart proposed in this
article provides noticeably smaller out-of-control ARL than the
classical and recently proposed control charts, but only for very
small shifts, which are of the order 0.1-0.25 standard deviation --
or its equivalent, based on the interquartile range, if the variance
does not exists. A large number of theoretical investigations and
computer simulations are witness of the fact that
  one can not expect existence of one "universal" chart with
  best performance in the whole range of shifts in the mean,
  if underlying distribution jump height are not specified.
From this point of view, the binary chart occupies the region of
small shifts.

Let us briefly review the definition and basic properties of the
classic $N \, p$-chart. Obviously, if the process \eqref{obs} is
in-control and \eqref{sym} holds, then -- roughly -- half of the
observations should be positive and the rest are expected to be
negative. In other words, having $N>1$ observations
\begin{equation}\label{sign}
Z_n\:\stackrel{def}{=}\: \mbox{sign}(Y_n)\:=\: \left\{
    \begin{array}{ccc}
      0 & if & Y_n < 0\\
      1 & if & Y_n\geq 0    \\
    \end{array} \right. ,\quad n=1,2,\ldots ,N
\end{equation}
and introducing the counting random variable
\begin{equation}\label{numbposit}
I_N\: \stackrel{def}{=}\: \mbox{card}\{Z_i=1,\: i=1,2,\ldots
,N\}=\sum_{i=1}^{N} Z_i
\end{equation}
we have $\E(I_N)=N/2$, since $I_N$ is a binomial random variable
corresponding to $N$ trials and success probability $ p_0 = 1/2 $.
Here and in the sequel $\E$ denotes the expectation.

If a shift of size $m$ occurred, then the distribution of
subsequent $Y_n$'s is no longer symmetric around zero and the
probability of $Z_n=1$ changes to
\begin{equation}\label{prob1}
    p_1\:= 1-F(-m)
\end{equation}
where $p_1$ can be larger or smaller than $1/2$, depending on
whether $m$ is positive or negative. Summarizing, one can detect a
shift $m$ by testing the  hypothesis $H_0: p_0=1/2$ against the
alternatives that the success probability in one trial is
different than $1/2$.

If the process is in-control, the dispersion of the binomial r.v.
$I_N$ equals $\sqrt{N\,p_0\,(1-p_0)}$. Then, $I_N/N$ has expectation
$ p_0 $ and dispersion $\sqrt{p_0\,(1-p_0)/N}$. Approximating the
binomial distribution by the corresponding normal law we arrive at
the well known $N \, p$-chart with upper control limit
\begin{equation}\label{ucl}
   \UCL = p_0\: +\: k\, \sqrt{p_0\,(1-p_0)/N}
\end{equation}
and the lower control limit ($\LCL$)
\begin{equation}\label{lcl}
   \LCL = p_0\: -\: k\, \sqrt{p_0\,(1-p_0)/N},
\end{equation}
where $k$ is selected according to required averaged run length
(ARL) in-control, the standard choice being $k=3$. If $I_N/N$ is
outside the interval $(\LCL,\UCL)$, then the out-of-control state
is claimed. Repeating the above reasoning, we can obtain the $N\,
p_0$ version of this chart with the following control limits for
$I_N$
\begin{equation}\label{UCLLCL}
    N p_0 \: \pm \: k\, \sqrt{N\, p_0\,(1-p_0)},
\end{equation}
where $k$ is selected as above. For further discussions we refer
to \cite{Montgomery}.

\subsection{Modified $p$ chart}

The above chart is the starting point for our modifications. They
are necessary, since the classical chart \cite[][pp.
284-294]{Montgomery} is based on counting nonconforming items in
samples of size $N$, which are either taken daily or at $ N $
consecutive days, if only one observation is available at each
day. In the latter case, which is the setting we have in mind, the
chart is applied only each $N$th time instance. This can yield
substantially larger delays in detection. Obviously, such sampling
schemes are not appropriate for our purposes. Thus, we shall
modify the chart in such a way that it counts a fixed number,
$M>1$ say, previous individual observations $Z_n=1$ in a moving
window. If the process is in-control, then we expect that about
$M/2$ observations correspond to $ Z_n = 1 $.

More formally, we form a finite buffer of the length $M$, which
contains only $M$ past observations, excluding the latest one
$Z_n$. $M$ is called {\em buffer length}. When observation $Z_{n}$
is available, it replaces $Z_{n-1}$, which is pushed to replace $
Z_{n-2} $ and so on. At each time instant $n$ the present buffer
contents is used to verify whether the process is in-control. To
fix this idea, define the number of positive observations
contained in the buffer in time $n$
\begin{equation}\label{bufcont}
    J_n\: = \: \mbox{card}\{Z_i=1, i=(n-1), \ldots, n-M\}
    =\sum_{i=n-M}^{n-1} Z_i.
\end{equation}
Note that the difficulty with an initial content of the buffer
appears. The proposed modified $p$-chart is built on the
assumption that historical pre-run data are available which are
known to form a random sample of the in-control process. Thus, in
the sequel we assume that at time $n=0$ the buffer contains past
observations of the in-control process, which are numbered as
$Z_{-1}, \ldots, Z_{-M}$. Formally, we start the chart at $n=0$,
when the observation $Z_0$ arrives. Then, for $n=1,2,\ldots $ it
is verified whether the control statistic $J_n$ lies between the
control limits
\begin{equation}\label{modUCL}
   \UCL\:=\: M\, p_0\: + \: k\, \sqrt{M\, p_0\,(1-p_0)},
\end{equation}
and
\begin{equation}\label{modLCL}
   \LCL\:=\: M\, p_0\: - \: k\, \sqrt{M \, p_0\,(1-p_0)}.
\end{equation}
Clearly, for $ p_0 = 1/2 $ these formulas simplify to $ \UCL =
M/2\:+\: k\,\sqrt{M}/2 $ and $ \LCL = \: M/2\: -\:k\, \sqrt{M}/2
$. If $J_n$ is smaller than \LCL or larger than \UCL, then
out-of-control state is signaled. Note that the difference between
\UCL and \LCL is constant for this chart.

The main difference between the proposed chart and the classical
one can be summarized as follows. The classical $N\,p$ chart is
based on samples of size $N$ from {\em non-overlapping} production
intervals. In contrast, our chart counts events $Z_n=1$ in the
buffer on length $M$, which is moving forward with $n$, in such a
way that new observation $Z_n$ enters the buffer, while the oldest
one is pushed out of it. In other words, the content of the buffer
at time $n$ and at time $n+1$ highly overlap.

\section{Asymptotic results}
\label{CPM}

We will now present some asymptotic theory for the proposed
procedure providing an explanation of the superiority of the
modified $p$ chart for small jumps. To simplify exposition, we
slightly change the setting: We confine our study to a truncated
version of the one-sided control chart which gives a signal if $
J_n $ exceeds $ \UCL $ for some $ 1 \le n \le N $. However, our
results can be extended to deal with the general case as outlined
in \cite{Steland08}. The small jump setting will be modelled by an
appropriate asymptotic change-point model assuming a local
alternative for the probabilities resp. jump heights.

To simplify our exposition, we introduce a maximum sample size $N$
where monitoring stops in any case. Let us also rescale time by the
transformation $ t \mapsto \trunc{Nt} $, t $ \in [0,1] $, where $
\trunc{x} $ denotes the largest integer smaller or equal to $ x $, $
x \in \R $. In the sequel, the current time point $n$ will
correspond to $ t $, i.e., $ n = \trunc{Nt} $.

Define the process
\[
  \calJ_N(t) = \frac{1}{\sqrt{N}} \sum_{i=\trunc{Nt} - M}^{\trunc{Nt}-1}
  (Z_i - p_0), \qquad t \in [(M+1)/N,1].
\]
Note that $ \calJ_N( n/N ) $ is equal to the statistic $ J_n $
centered at its in-control expectation and scaled by $ N^{-1/2} $.
Now, the truncated version of the upper control chart of the last
section, which gives a signal if $ J_n $ exceeds \UCL, corresponds
to the stopping time
\[
  S_N = \min \{ M+1 \le n \le N : J_n > M p_0 + k \sqrt{ M p_0(1-p_0)
  }\}.
\]
We can represent $ S_N $ via the process $ \calJ_N(t) $. Indeed,
we have
\begin{equation}
\label{StopRule}
  S_N = N \inf \left\{ t \in [(M+1)/N, 1] : \calJ_N(t) > k \sqrt{ \frac{M}{N} p_0(1-p_0) }
  \right\}, \qquad N \ge 1.
\end{equation}
For the asymptotic framework in this section, let us assume that the
buffer length, $M $, is chosen as a $\N $-valued function of $ n =
\trunc{Nt}$, i.e., $ M = M_{\trunc{Nt}} $, satisfying the growth
condition
\begin{equation}
\label{CondM}
  \frac{ M_{\trunc{Nt}} }{ N } \to M(t),
\end{equation}
as the maximum sample size $N$ tends to $ \infty $. Here $M :
[0,1] \to [0,1] $ is a non-decreasing function which is continuous
on $ (0,1] $ with $ M(0) = 0 $. We will call $ M ${\em asymptotic
buffer length (strategy)}. Condition (\ref{CondM}) ensures that,
asymptotically, the buffer length $M$ is not too small compared to
$N$.

To ensure that the buffer is not longer than the available time
series, we impose the following condition.

{\bf Assumption (N):} The buffer length strategy $ M :[0,1] \to
[0,1] $ satisfies the {\em natural condition}
\[
  M(t) \le t \qquad \text{for all $ t \in [0,1] $}.
\]

We shall show that under the following assumption the modified chart
is superior to the classic one.

{\bf Assumption (M):} The buffer length strategy satisfies the {\em
modifier condition}, if
\begin{equation}
\label{ModifierCondition}
  M(t) < t \qquad \mbox{for all $t \in (0,1]$},
\end{equation}

Let us now consider some examples.

\begin{example} Put $M(0) = 0 $ and $ M_{\trunc{Nt}} = \trunc{\xi t N} $, $ t \in
(0,1] $, for some $ \xi \in (0,1] $. Obviously, the natural
condition (N) is satisfied, iff. $ \xi < 1 $. Particularly, the
classic $N\, p$ chart is given by $ M_{\trunc{Nt}} = \trunc{Nt} $, $
t \in [0,1] $, thus corresponding to $ \xi = 1 $ and $ M(t) = t $, $
t \in [0,1] $.
\end{example}

The following example considers the case that the buffer lengths $
M_n $ are constant with respect to $ n $.

\begin{example}
\label{Ex2} Suppose $ M_{\trunc{Nt}} = \trunc{ \eta N } $ for some
constant $ \eta \in (0,1] $. For $ t \in [0, \eta/N] $ the available
data $ Y_1, \dots, Y_{\trunc{Nt}} $ do not fill the buffer. One may
assume that pre-run data $ Y_{-M+1}, \dots, Y_0 $ are available.
However, to ensure a fair comparison with the classic $N \, p$
chart, let us consider the choice
\[
  M_{\trunc{Nt}} = \left\{
    \begin{array}{ll}
      0, \qquad & \trunc{Nt}  < \trunc{N \eta}, \\
      \trunc{\eta N}, \qquad & \trunc{Nt} \ge N \eta,
    \end{array} \right.
\]
yielding $ M(t) = \eta \eins_{[\eta,1]}(t) $, $ t \in [0,1] $.
Alternatively, one may set
\[
  M_{\trunc{Nt}} = \min( \trunc{Nt}, \trunc{N \eta} )
\]
yielding $ M(t) = \min(t,\eta) $. Now the modified chart does not
require historical data at the beginning. It starts as the classic
chart and is modified as time proceeds to catch small late changes
better.
\end{example}

Let us now consider an appropriate asymptotic change-point model
for a small jump at location $q$. Assume that
\begin{equation}
\label{CPModel}
  \mu_{Ni} = \E(Z_i) = \left\{ \begin{array}{ll}
    p_0, \qquad i < q = \trunc{N\vartheta}, \\
    p_1, \qquad i \ge q, \end{array} \right.
\end{equation}
for some constant $ \vartheta \in (0,1) $ which specifies the
fraction of the maximum sample size $N$ where the jump occurs.
We model the out-of-control probability $ p_1 $ as a sequence of
local alternatives given by
\[
  p_1 = p_{N1} = p_0 + \Delta/\sqrt{N},
\]
such that $ \Delta = \sqrt{N}(p_1-p_0) > 0 $.

Note that this model yields a triangular array of observations,
\[
  Z_{Ni}, \quad 1 \le i \le N, \quad N \ge 1,
\]
where for each $N$ the random variables $ Z_{N1}, \dots, Z_{NN} $
are independent with $ \E(Z_{Ni}) = p_0 $ for $ 1 \le i < q $ and
$ \E(Z_{Ni}) = p_{N1} $ for $ q \le i \le N $. Below we shall drop
the independence assumption.

\begin{remark}
For our purposes it is appropriate to formulate the change-point
model in terms of the probabilities $p_0 $ and $ p_1 $, but let us
briefly discuss how it relates to a model for the jump height $m$.
Assume the underlying probability density $f(x)$ is continuous and
bounded in a neighborhood of $0$. If we consider a local alternative
model for the jump height where $ m_N = \Delta_m / \sqrt{N} $ for a
positive constant $ \Delta_m $, (\ref{prob1}) and the mean value
theorem give
\[ p_1 - p_0 = f( \xi_N ) \Delta_m / \sqrt{N} \]
for points $ \xi_N $ between $ 0 $ and $ \Delta_m / \sqrt{N} $.
Thus, in this case \[ p_1 = p_0 + (f(0) + o(1))\Delta_m/\sqrt{N}.
\]
\end{remark}

In the sequel, $ B(t) $, $ t \in [0,1] $, denotes a standard
Brownian motion with $ B(0) = 0 $, i.e., a centered Gaussian process
with covariance function $ \Cov( B(s), B(t) ) = \min(s,t) $, $ s, t
\in [0,1] $. The process $ \calJ_N(t) $, $ t \in [0,1] $, is an
element of the Skorohod space $ D[0,1] $ of all functions $ f:[0,1]
\to \R $ which are right-continuous with existing limits from the
left. We denote distributional convergence (weak convergence) for a
sequence $ \{ X, X_n \} \subset D[0,1] $ by $ X_n \Rightarrow X $,
as $ n \to \infty $.
For details we refer to \cite{Bil91} and \cite{Sho00}.


Our main result works under very general assumptions. Indeed, it
just requires that the random variables $ Z_{Ni} - \mu_{Ni} $ form a
martingale difference array with $ \E( Z_{Ni}^r | \calF_{N,i-1} ) =
\mu_{Ni} $ for all $i$ and $ r = 1,2 $, for some filtration $ \{
\calF_{Ni} \} $. In this case, the expectation in (\ref{CPModel}) is
replaced by the conditional expectation $ \E( Z_{Ni} | \calF_{N,i-1}
) $. Recall that an array $ \{ X_{n,m} : 1 \le m \le n_k, n \ge 1 \}
$ of random variables defined on a common probability space $
(\Omega, \calF, \PP) $ is called {\em martingale difference array}
with respect $ \{ \calF_{n,m} \} $, if $ \{ \calF_{n,m} \} $ forms a
{\em filtration}, i.e.,
 \[
  \calF_{n,0} = \{ \emptyset, \Omega \} \subset \calF_{n,1}
  \subset \cdots \subset \calF_{n,n_k} \subset \calF,
\]
each $ X_{n,m} $ is $ \calF_{n,m} $-measureable, and $ \E( X_{n,m} |
\calF_{n,m-1} ) = 0 $ for all $ 1 \le m \le n_k $ and $ n \ge 1 $.

The martingale difference assumption is a natural approach to deal
with time series. However, it is also suited and general enough to
treat (locally) dependent image data, as demonstrated by the
following example working with {\em sliced rectangular
neighborhoods}.

\begin{example} {\sc (A Model for Locally Dependent Image Data)}\\
Suppose each column of an image consisting of $I$ columns and $J$
rows is analyzed from bottom to top. Assume the origin $(0,0)$
corresponds to the lower left corner and the pixels are denoted by
$ (i,j) \in \calI \times \calJ = \{ 0, \dots I \} \times \{ 0,
\dots, J \} $ for integers $ I, J $. Let $ \{ \xi_{ij} : (i,j) \in
\calI \times \calJ \} $ be an array of i.i.d. random variables
with common d.f. $F$ satisfying $ \E(\xi_{ij}) = 0 $ and $ \Var(
\xi_{ij}^2 ) = 1 $ for all $ (i,j) \in \calI \times \calJ $,
representing the background noise of an image. For $ h \ge 1 $
define a {\em sliced $h$-neighborhood} for the pixel $ (i,j) $ by
\[
  \calN_{ij} = \{ (k,l) \in \calI \times \calJ : (k=i \wedge l \le
  j) \vee (1 \le |i-k| \le h \wedge l \le j + h ) \}
\]
and denote by $ \Xi_{ij} = \{ \xi_{kl} : (k,l) \in \calN_{ij} \} $
the corresponding set of $ \xi_{kl} $'s. $  \calN_{ij} $ is a
rectangle with width $ 2h+1 $ and height $ j+h $, sliced along the
line from $ (i,j) $ to $ (i,j+h) $. Then $ \calN_{i1} \subset
\cdots \subset \calN_{iJ} $, and consequently the family
\[
  \calF_{i0} = \{ \emptyset, \Omega \}, \quad
  \calF_{ij} = \sigma( \Xi_{ij} ) = \sigma( \xi_{kl} : (k,l) \in \calN_{ij} ),
\]
defines a filtration. For what follows, notice that $ \xi_{ij} $ is
not an element of the set $ \Xi_{i,j-1} $. Let us now assume that
the errors disturbing the true image are given by the model
equations
\[
  \ve_{ij} = h_{ij} \xi_{ij}, \qquad (i,j) \in \calI \times \calJ,
  \qquad j = 2, \dots, J,
\]
for $ \calF_{i,j-1} $-measureable random variables $ h_{ij} $ with
existing second moments. Then $ h_{ij} = H_{ij}( \Xi_{i,j-1} ) $ for
functions $ H_{ij} $. Obviously, $ \ve_{ij} $ is $ \calF_{ij}
$-measureable and, since $ \xi_{ij} $ is independent from the random
variables of the set $ \Xi_{i,j-1} $, we have $ E( \xi_{ij} |
\calF_{i,j-1} ) = E(\xi_{ij}) = 0 $ yielding
\[
 \E( \ve_{ij} | \calF_{i,j-1} ) = h_{ij} E(\xi_{ij}) = 0.
\]
Thus, $ \{ \ve_{ij} : (i,j) \in \calI \times \calJ \} $ is a
martingale difference array, and $ \{ \ve_{ij} : j \in \calJ \} $
is a martingale difference sequence with respect to $ \{
\calF_{ij} : j = 0, \dots, J \} $ for each $ i \in \calI $. Since
\[
  \Var( \ve_{ij} | \calF_{i,j-1} ) = h_{ij}^2,
\]
$ h_{ij}^2 $ is the conditional variance given the neighboring
pixels. Particularly, $ h_{ij}^2 $ may depend on the noise levels of
these neighboring pixels. Recall that when the $k^{th}$ column is
analyzed, $ Z_{Ni} $ is given by $ Z_{Ni} = \eins( \ve_{ki} \le 0 )
$ for $ i = 1, \dots, N = J $. We have
\[
  \E( Z_{Ni} | \calF_{k,i-1} ) = \PP( h_{ik} \xi_{ik} \le 0 | \calF_{k,i-1}) = F(0/h_{ik}) =
  p_0 = 1/2.
\]
and $ \Var(Z_{Ni} | \calF_{k,i-1} ) = p_0(1-p_0) $. Consequently,
the random variables $ Z_{Ni} - p_0 $, $ i = 1, \dots, N $, also
form a martingale difference array with respect to the filtration $
\{ \calF_{ki} : i = 1, \dots, N \} $ with common conditional
variance $ p_0(1-p_0) $.
\end{example}

We are now in a position to formulate our main result concerning the
weak convergence of the process $ \calJ_N(t) $ and the corresponding
central limit theorem for the modified chart.

\begin{theorem}
\label{main1} Suppose (N) and that the random variables $ \xi_{Ni}^*
= (Z_{Ni}-\mu_{Ni}))/\sqrt{ \mu_{Ni}(1-\mu_{Ni}) } $ form a
martingale difference array with respect to some filtration $
\calF_{Ni} $, such that
\[
  \E( \xi_{Ni}^* | \calF_{N,i-1} ) = 0 \qquad \text{and} \qquad \Var( \xi_{Ni}^* | \calF_{N,i-1} ) =
  1,
\]
for all $ 1 \le i \le N $, $ N \ge 1 $. Then the following
conclusions hold true.
\begin{itemize}
\item[(i)] If there is no change-point, the process $ \calJ_N $
converges weakly,
\[
  \calJ_N(t) \Rightarrow \eta_0 [ B(t) - B( t - M(t) ) ],
\]
as $N \to \infty $, where
\[
  \eta_0^2 = \lim_{N \to \infty} Var\biggl( N^{-1/2} \sum_{i=1}^{N} (Z_{Ni} -
  \E(Z_i)) \biggr) = p_0(1-p_0).
\]
The normed stopping time converges in distribution,
\[
  S_N/N \stackrel{d}{\to} \tau_M
\]
where
\[
  \tau_M = \inf \{ t \in [0,1] : B(t) - B(t-M(t)) > k \sqrt{M(t)} \}
\]
\item[(ii)] Under the local change-point model (\ref{CPModel}), the process $ \calJ_N $
converges weakly,
\[
  \calJ_N(t) \Rightarrow \calJ^{(1)}_M(t) =
  \left\{
  \begin{array}{ll}
    \eta_0 [B(t) - B(t - M(t))],                         & t < \vartheta, \\
    \eta_0 [B(t) - B(t - M(t))] + (t - \vartheta ) \Delta, & \vartheta \le t < \vartheta + M(t), \\
    \eta_0 [B(t) - B(t - M(t))] + M(t) \Delta,           & \vartheta + M(t) \le t,
  \end{array}
  \right.
\]
as $N \to \infty $. The normed stopping time converges in
distribution,
\[
  S_N/N \stackrel{d}{\to} \tau_M^{(1)} = \inf \{ s \in [0,1] : \calJ^{(1)}_M(s) > k \sqrt{M(s)} \eta_0 \}.
\]
\end{itemize}
\end{theorem}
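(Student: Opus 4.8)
The plan is to deduce everything from a single functional martingale central limit theorem for the cumulative sum process, and then to recover $ \calJ_N $, the drift, and the stopping time by (extended) continuous mapping arguments. First I would introduce the partial-sum process
\[ W_N(t) = \frac{1}{\sqrt{N}} \sum_{i=1}^{\trunc{Nt}} (Z_{Ni} - \mu_{Ni}), \qquad t \in [0,1], \]
and note that, writing $ Z_{Ni} - \mu_{Ni} = \sqrt{\mu_{Ni}(1-\mu_{Ni})}\,\xi_{Ni}^* $, it is a normalized martingale-difference partial sum. Since the summands are bounded (the $ Z_{Ni} $ take values in $ \{0,1\} $), the conditional Lindeberg condition is automatic, and the conditional quadratic variation satisfies
\[ \frac{1}{N} \sum_{i=1}^{\trunc{Nt}} \Var(Z_{Ni} - \mu_{Ni} \mid \calF_{N,i-1}) = \frac{1}{N} \sum_{i=1}^{\trunc{Nt}} \mu_{Ni}(1-\mu_{Ni}) \to p_0(1-p_0)\, t = \eta_0^2\, t, \]
because $ \mu_{Ni} \in \{ p_0,\, p_0 + \Delta/\sqrt{N} \} $ so that $ \mu_{Ni} \to p_0 $ uniformly under the local alternative. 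A functional martingale central limit theorem (see, e.g., \cite{Sho00}) then yields $ W_N \Rightarrow \eta_0 B $ in $ D[0,1] $, \emph{both} in the no-change and in the local change-point case: the local alternative is too weak to affect the limiting variance, so the Gaussian part is identical in (i) and (ii).

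Next I would split $ \calJ_N $ into a random and a deterministic part according to the centering,
\[ \calJ_N(t) = \underbrace{\frac{1}{\sqrt{N}} \sum_{i=\trunc{Nt}-M}^{\trunc{Nt}-1} (Z_i - \mu_{Ni})}_{R_N(t)} + \underbrace{\frac{1}{\sqrt{N}} \sum_{i=\trunc{Nt}-M}^{\trunc{Nt}-1} (\mu_{Ni} - p_0)}_{d_N(t)}. \]
Up to negligible boundary summands one has $ R_N(t) = W_N(\trunc{Nt}/N) - W_N((\trunc{Nt}-M)/N) $, where by \eqref{CondM} the lower argument converges to $ t - M(t) $. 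Applying the continuous mapping theorem to the window functional $ \Phi(f)(t) = f(t) - f(t-M(t)) $, which is continuous at continuous limit paths, gives $ R_N \Rightarrow \eta_0[B(t) - B(t-M(t))] $. For the drift, $ \mu_{Ni} - p_0 $ equals $ 0 $ for $ i < q $ and $ \Delta/\sqrt{N} $ for $ i \ge q = \trunc{N\vartheta} $, so $ d_N(t) = (\Delta/N)\,\#\{\, \trunc{Nt}-M \le i \le \trunc{Nt}-1 : i \ge q \,\} $. Counting how many indices of the moving window lie past the change-point yields exactly the three regimes $ t < \vartheta $, $ \vartheta \le t < \vartheta + M(t) $, and $ t \ge \vartheta + M(t) $, with deterministic limits $ 0 $, $ (t-\vartheta)\Delta $, and $ M(t)\Delta $ — precisely the drift of $ \calJ^{(1)}_M $. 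Adding this deterministic continuous drift to the weak limit (Slutsky) establishes (ii), and (i) is the special case $ \Delta = 0 $.

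For the stopping times I would treat $ S_N/N $ as the image of $ \calJ_N $ under the first-passage functional $ \psi(f) = \inf\{ t : f(t) > g(t) \} $, where the boundary in \eqref{StopRule} is $ k\sqrt{(M_{\trunc{Nt}}/N)\,p_0(1-p_0)} \to k\sqrt{M(t)}\,\eta_0 =: g(t) $, again by \eqref{CondM}. A continuous mapping theorem for first-passage times then gives $ S_N/N \stackrel{d}{\to} \tau_M $ in case (i) and $ \tau_M^{(1)} $ in case (ii).

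I expect the two points of continuity that drive the mapping steps to be the main obstacles. The first is the Skorohod continuity of $ \Phi $: since the buffer enters through $ M_{\trunc{Nt}} $ rather than through the fixed $ M(t) $, one genuinely needs an \emph{extended} continuous mapping theorem, for which $ M_{\trunc{Nt}}/N \to M(t) $ \emph{uniformly} is required; this upgrade follows from \eqref{CondM} together with the monotonicity and continuity of $ M $ via a Pólya/Dini-type argument. The second, and I believe the genuinely delicate, obstacle is the almost-sure continuity of the first-passage map $ \psi $ at the limit process: this needs the limit to cross $ g $ strictly rather than tangentially at its first-passage time with probability one, a non-degeneracy property that must be read off from the local path behavior of the Gaussian difference $ B(t) - B(t-M(t)) $ and, in case (ii), from the strict positivity of the added drift near the crossing.
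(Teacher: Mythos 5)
Your proposal is correct, and it rests on the same two pillars as the paper's proof---a functional CLT for martingale difference arrays followed by mapping arguments---but your treatment of the change-point case (ii) uses a genuinely different decomposition. The paper splits time into three regimes and works with two differently centered and scaled partial-sum processes, $ (Z_{Ni}-p_0)/\sqrt{Np_0(1-p_0)} $ before the change and $ (Z_{Ni}-p_{N1})/\sqrt{Np_{N1}(1-p_{N1})} $ after, which are reassembled through a sequence of functionals $ \varphi_N $ shown to be uniformly Lipschitz and convergent to the window functional $ \varphi $; weak convergence of $ \varphi_N(\widetilde{Z}_N) $ is then obtained via the Skorohod/Dudley/Wichura almost-sure representation theorem, i.e., the paper proves the needed extended continuous mapping statement by hand. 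You instead center each observation at its own mean $ \mu_{Ni} $, so that a \emph{single} partial-sum process $ W_N $ is a martingale across the change point; the drift splits off as a purely deterministic, uniformly convergent term (your index count reproduces exactly the three regimes), and case (i) becomes the special case $ \Delta = 0 $. This buys a more unified argument at the modest extra cost of checking that the $ O(N^{-1/2}) $ perturbation of the conditional variances does not alter the limit $ \eta_0^2 t $, which you do; your Lindeberg verification by boundedness of the summands is also simpler than the paper's conditional Markov inequality. Finally, the two obstacles you flag are real and are precisely where the paper is terse: uniform convergence $ \sup_t |M_{\trunc{Nt}}/N - M(t)| \to 0 $ is equally needed in the paper's step $ \| \varphi_N(z) - \varphi(z) \|_\infty \to 0 $ (your P\'olya-type upgrade from (\ref{CondM}) requires monotonicity of $ n \mapsto M_n $, an assumption the paper leaves implicit), and the almost-sure absence of tangential crossings required for continuity of the first-passage map is asserted rather than proved in the paper, which only invokes a.s. continuity of the limit process. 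On these points your proposal is, if anything, more explicit than the published argument.
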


\begin{remark} Notice that the standard i.i.d. setting, where it is
assumed that $ Z_{N1}, \dots, Z_{NN} $ are independent and
identically distributed Bernoulli variables with success
probability $ p_0 $, is covered as a special case.
\end{remark}

The above theorem says that, asymptotically, the control chart
behaves as the stopping time $ \tau_M $ which is driven by the
stochastic process
\[
  \calV(t) = B(t) - B( t - M(t) ).
\]
Notice that the one-dimensional marginals of $ \calV(t) $ are
distributed as $ B(M(t))$. Further, for $ s \le t $ we have
\[
  E \calV(s) \calV(t) = \left\{
    \begin{array}{cc}
      0, \qquad  &s-M(s) \le s \le t-M(t) \le t, \\
      s-t+M(t), \qquad & s-M(s) \le t - M(t) \le s \le t, \\
      M(s), \qquad & t-M(t) \le s -M(s) \le s \le t.
    \end{array} \right.
\]
For small values of $ |s-t| $, i.e., locally, the process $ V(t) $
behaves similar as the process $ B(M(t)) $, if $ M(t) $ is a
smooth function.

The above theoretic results explain the benefits from using the
modified binary chart: Assume (M) and suppose a signal is given at
time $ t \in [\vartheta, \vartheta + M(t)) $ where $ \eta \le
\vartheta $ (cf. Example~\ref{Ex2}.) In this case
\[
  \frac{ \calV(t) }{ \sqrt{M(t)} } + \frac{ (t-\vartheta) \Delta }{
  \eta_0 \sqrt{M(t)} } > k.
\]
Right before the threshold $k$ is hit, the behavior of the random
part of the left hand side can be approximated by the process $
B(M(t))/\sqrt{M(t)} $, which has expectation $0$, variance $1$ for
any function $ M(t) $, and covariance function
\[
 (s,t) \mapsto \frac{ \min( M(s), M(t) ) }{ \sqrt{M(s) M(t)} }.
\]
For small values of $ |s-t| $ and smooth $M(t)$ this is
approximately a Brownian motion. Consider the drift term $
(t-\vartheta) \Delta / (\eta_0 \sqrt{M(t)} ) $, which mainly yields
the detection power. The modifier condition (M) ensures that the
drift term is strictly larger than the drift term for the case $
M(t) = t $ corresponding to the classic $N\, p$ chart. This explains
the superior performance of the modified chart for small jumps.

If the change was not detected until time $ \vartheta + M(t) $, a
signal is given if
\[
  \frac{ \calV(t) }{ \sqrt{M(t)} } + \frac{ \sqrt{M(t)} \Delta }{\eta_0}  > k.
\]
For the random part the same arguments as given above apply. But now
under condition (M) the drift term is strictly smaller than the
drift term for the case $ M(t) = t $. We may summarize that the
limit theorem indicates that the modified $p$ chart is preferable to
detect very small jumps right after the change-point.

Also notice that Theorem~\ref{main1} yields well defined limit
distributions for small jumps of the order $ N^{-1/2} $. Clearly,
for jumps of higher order, the drift diverges and dominates the
random part, such that the beneficial effect of the function $ M(t)
$ is not visible.

\section{Practical issues of control chart design}
\label{PracticalIssues}

Unlike the classic $N\,p$-chart, the modified chart has two tunable
parameters, namely, $M$ and $k$, which should be carefully selected
in order to ensure small out-of-control ARLs (average run length to
detection) under the constraint that the in-control ARL (average run
length to false alarm) is not smaller than a given level.

We will now summarize our experience on tuning this chart by
simulations, which are justified to some extent by the theoretical
results presented in the previous section. The major issue is how to
select the control limit.
\begin{itemize}
    \item[(i)] In practice, the $ 3 \sigma$ rule is often advocated, i.e.,
    $k=3$. However, this is not advisable here, since it
    leads to excessively long in-control ARLs. For our control
    chart, the in-control ARL also depends on the buffer length
    $M$. Selecting $k=2.34$ and $M=9$ we get first reasonable
    in-control ARL about $500$.
    \item[(ii)] For a given buffer length
    the same in-control ARL is attained for $k$ from
    a certain relatively long interval. This is due to the fact
    that $J_n$ is always an integer.
    \item[(iii)]
    Analysis of Figure~\ref{figARLInC}, where $\log$ of
    in-control ARL is plotted as a function of $k$
    for different buffer lengths, reveals that
    it is advisable to select $k$ at the left end
    of that interval. That choice ensures the specified
    in-control ARL and  minimizes the distance $ \UCL - \LCL $.
\end{itemize}

In view of these remarks we suggest the following practical approach
to select the parameters $M$ and $k$ of the chart.

\begin{enumerate}
        \item Select a desired in-control ARL, e.g.,
        equal to 370.
        \item Select the buffer length $M>1$. A discussion on
        selecting $M$ is presented below.
        \item For a practical application one may simulate the
        in-control ARL for $k$ varying from 1 to 3.
        It is not difficult to find a reasonable $k$ in this way,
        but determining exactly the smallest
        $k$, which guarantees the specified in-control ARL is a computationally
        demanding task.
    \end{enumerate}

For the reader's convenience Table~\ref{tabARLInC} summarizes some
pairs $(M,k) $ with minimal $ k $ (accuracy $0.01$) ensuring an
in-control ARL of approximately $435$. Notice that in general the
fact that $ J_n $ is integer-valued prevents the construction of a
control chart with in-control ARL being equal to the target
in-control ARL.

\begin{figure}
\begin{center}
\fbox{\includegraphics[width=0.925\textwidth]{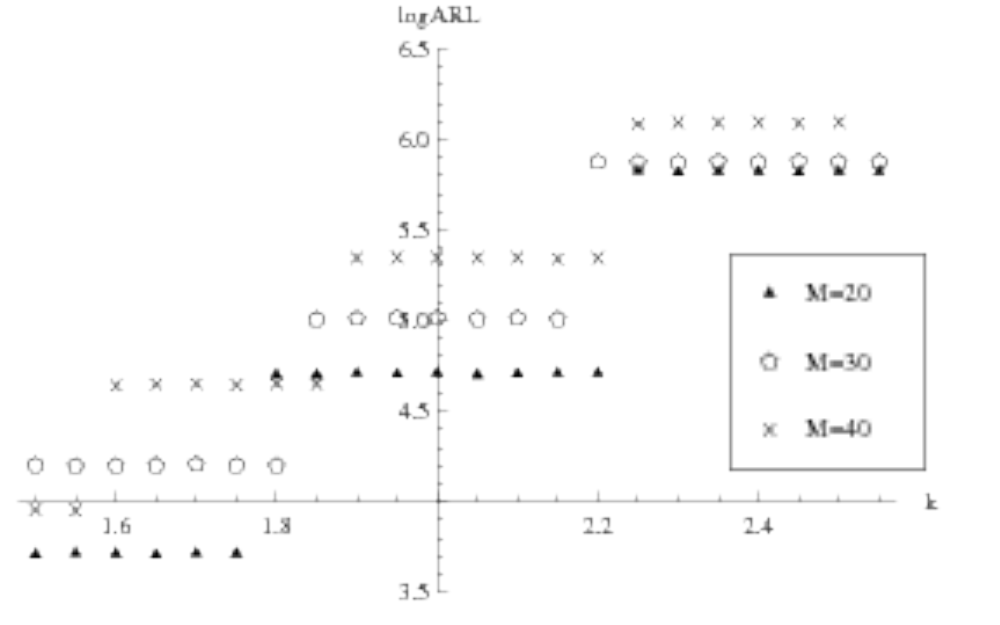}}
\caption{Dependence of the logarithm of the in-control ARL on the
threshold $k$ for different buffer sizes $M$. The results were
obtained for Gaussian $N(0,1)$ errors by averaging
 $10^4$ simulation runs.}\label{figARLInC}
\end{center}
\end{figure}

One may also select $M$ to minimize the out-of-control ARL for a
given jump height $m$. Figure~\ref{figARLOutCvsM} indicates that for
jump heights $m=0.25$, $m=0.5$, and $m=0.75$ there exist optimal
buffer lengths $M$. The choices $M=71$, $M=28$, $M=23$ are optimal
for $m=0.25$, $m=0.5$, and $m=0.75$, respectively, taking into
account that the selection was made among a rather limited number of
buffer lengths. Clearly, an exhaustive search may yield slightly
better results. Note, however, that for $m=1$ the plot is increasing
and one might expect that the best choice is for $M<12$, but in this
region one can not attain in-control ARL of order $435$.

\begin{figure}
\begin{center}
\fbox{\includegraphics[width=0.925\textwidth]{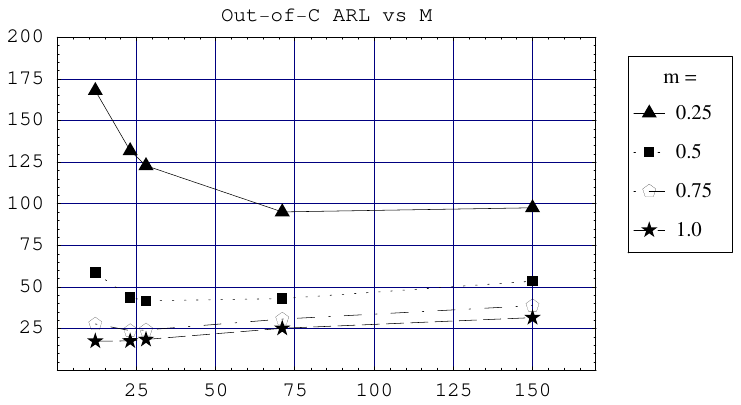}}
\caption{Dependence of the out-of-control ARL as a function of the
buffer size $M$ for different jump heights $m$ and normal
errors.}\label{figARLOutCvsM}
\end{center}
\end{figure}
\renewcommand{\arraystretch}{1.75}
\begin{table}
  \centering
\begin{tabular}{|r|c|c|c|c|c|c|c|c|}  \hline
  $M\, =$      & 12   & 23   & 28   & 71  & 90  & 150 & 212 & 441 \\ \hline
  $k\, =$      & 2.31 & 2.30 & 2.27 & 2.02& 2.0 & 1.8 & 1.65& 1.39 \\  \hline
  ARL${}_0\, =$& 395  & 415  & 423  & 411 & 450 & 452 & 440 & 456  \\   \hline
\end{tabular}
  \caption{Pairs of parameters $(M,k)$ of the proposed chart
  ensuring an in-control ARL or the order $435$.}\label{tabARLInC}
\end{table}

\section{Simulation studies}
\label{Sim}

We performed extensive simulations aiming at the following issues.
Firstly, we were interested in identifying pairs of the buffer
length $M$ and the threshold $k$ ensuring a specified in-control ARL
(at least approximately). Secondly, we investigated the
out-of-control ARL for various jump heights, when the underlying
observations are normally distributed. Third, we compared the binary
chart with other charts for the case of normally distributed error
terms, focusing on the out-of-control ARL as a performance measure.
Finally, we studied the behavior of the out-of-control ARL for the
binary chart when the errors are non-normal.

The simulation results are given in the tables below. All the
results were obtained by averaging 30000 simulation runs. Simulated
jump occured at time zero and the buffer was fed up by in-control
pre-run observations. The results of simulation studies can be
summarized as follows.
\begin{itemize}
\item[(i)] For Gaussian errors and an out-of-control ARL fixed at
435, our chart with buffer length $M=150$ (see Table~\ref{G435Long})
provides shorter out-of-control ARL's than CUSUM, Optimal EWMA,
Shewhart-EWMA, GEWMA and GLR (see~\cite{AnnStat2004} for
definitions), provided the jump is small. To be precise, the
out-of-control ARL of our chart is about $243$ for a jump $m = 0.1\,
\sigma$, and about $97$ for $m = 0.25\, \sigma$, while for the above
mentioned charts we have ARL's between $295$ and $324$ and between
$105$ and $110$, respectively. Simultaneously, the dispersion of the
RL time of our chart is considerably smaller and equals $172$ for $m
= 0.1\, \sigma$ and about $59$ for $m=0.25\, \sigma$, while for the
charts discussed in \cite{AnnStat2004} we have RL time dispersions
of the orders $267$-$324$ and $79$-$102$, respectively. \item[(ii)]
Qualitatively the same pattern can be observed when the
out-of-control ARL is fixed at $840$ and errors are Gaussian (see
Table~\ref{G840Long} and \cite{AnnStat2004}). \item[(iii)] When the
jump is larger than $0.5\, \sigma$, the proposed chart is much
slower than the above mentioned charts, but this shortcoming can
easily be handled by applying several charts simultaneously and
claiming an alarm when one of them gives a signal. \item[(iv)] The
proposed chart retains its advantages in the range of small jumps
when the errors are  double exponentially distributed and even
behaves quite well for difficult distributions as the Cauchy one
(see Table~\ref{nonG435}).
\end{itemize}

\begin{table}[h]
  \centering
\mbox{ \noindent\begin{tabular}{|c|c|c|}
  \hline
\multicolumn{3}{|c|}{ $M$= 12, $k=$2.31 } \\ \hline Jump &   ARL &
RL Disp, \\ \hline
0    &   395.27 & 171.09  \\ \hline 0.1  &   328.33 & 144.18  \\
\hline 0.25 &   168.09 &  72.47   \\ \hline 0.5  &    58.65 &  24.52
\\ \hline 0.75 &    27.84 &  10.91   \\ \hline 1    &    17.51 &
6.35   \\ \hline 1.25 &    12.98 &   4.41   \\ \hline 1.5  & 10.96 &
3.54   \\ \hline 1.75 &    10.00 &   3.14   \\ \hline 2 &     9.46 &
2.94   \\ \hline 2.25 &     9.19 &   2.84   \\ \hline
2.5  &     9.09 &   2.80   \\ \hline 2.75 &     9.05 &   2.79   \\
\hline 3    &     9.01 &   2.77   \\ \hline
\end{tabular}
\begin{tabular}{|c|c|c|}
  \hline
\multicolumn{3}{|c|}{ $M$= 23, $k=$2.3 } \\ \hline Jump &   ARL &
RL Disp, \\
\hline 0   &  415.66 & 181.42 \\ \hline 0.1 &  305.80 & 133.14  \\
\hline 0.25&  131.89  & 56.00  \\ \hline 0.5 &   43.78 & 17.08
\\ \hline 0.75&   23.76 &   8.35  \\ \hline 1   & 17.60 & 5.76
\\ \hline 1.25&   14.99 &   4.76  \\ \hline 1.5 &   13.66 &   4.31
\\ \hline 1.75&   12.88 &   4.05  \\ \hline
2   &   12.45 &   3.91      \\ \hline 2.25&   12.26 &   3.84  \\
\hline 2.5 &   12.12 &   3.80      \\ \hline 2.75&   12.04 &   3.77
\\ \hline 3   &   11.96 &   3.75      \\ \hline
\end{tabular}
\begin{tabular}{|c|c|c|}
  \hline
\multicolumn{3}{|c|}{ $M$= 28, $k=$2.27 } \\ \hline Jump &   ARL &
RL Disp, \\ \hline
0   & 423.12  &  185.75   \\ \hline 0.1 & 303.43  &  133.17    \\
\hline 0.25& 122.90  &   51.72    \\ \hline 0.5 &  41.66  &   15.73
\\ \hline 0.75&  24.18  &    8.27     \\ \hline 1   &  18.53  &
6.00    \\ \hline 1.25&  16.10  &    5.12     \\ \hline 1.5 &  14.76
&    4.68    \\ \hline 1.75&  13.99  &    4.42     \\ \hline 2   &
13.54  &    4.28    \\ \hline 2.25&  13.25  &    4.18     \\ \hline
2.5 &  13.14  &    4.14    \\ \hline 2.75&  12.96  &    4.09     \\
\hline 3   &  13.09  &    4.12    \\ \hline
\end{tabular}}\\[0,5cm]
\caption{Binary chart applied to observations with
 Gaussian errors. Chart tuned to in-control ARL about $435$.
 Short buffer length.}\label{G435Short}
\end{table}
\begin{table}[h]
\centering \mbox{
\begin{tabular}{|c|c|c|}
  \hline
\multicolumn{3}{|c|}{ $M$= 71, $k=$2.02 } \\ \hline Jump &   ARL &
RL Disp, \\ \hline 0   & 411.23& 301.39    \\ \hline 0.1 & 254.91&
182.71    \\ \hline 0.25&  95.12&  60.68    \\ \hline 0.5 &  43.03&
23.33    \\ \hline 0.75&  30.75&  16.08    \\ \hline 1   &  25.23&
13.06    \\ \hline 1.25&  22.11&  11.39    \\ \hline 1.5 &  20.28&
10.41    \\ \hline 1.75&  19.22&   9.83    \\ \hline 2   &  18.61&
9.50    \\ \hline 2.25&  18.13&   9.25    \\ \hline 2.5 &  17.91&
9.14    \\ \hline 2.75&  17.83&   9.08    \\ \hline 3   &  17.69&
9.03    \\ \hline
\end{tabular}
\begin{tabular}{|c|c|c|}
  \hline
\multicolumn{3}{|c|}{ $M$= 150, $k=$1.8 } \\ \hline Jump &   ARL &
RL Disp, \\ \hline 0   & 452.05& 337.19  \\ \hline 0.1 & 243.54&
172.58   \\ \hline 0.25&  97.58&  58.68   \\ \hline 0.5 &  53.50&
29.52   \\ \hline 0.75&  38.80&  21.12   \\ \hline 1   &  31.60&
17.07   \\ \hline 1.25&  27.71&  14.87   \\ \hline 1.5 &  25.20&
13.50   \\ \hline 1.75&  23.82&  12.74   \\ \hline 2   &  23.10&
12.30   \\ \hline 2.25&  22.64&  12.05   \\ \hline 2.5 &  22.31&
11.86   \\ \hline 2.75&  22.17&  11.80   \\ \hline 3   &  22.15&
11.77   \\ \hline
\end{tabular}
\begin{tabular}{|c|c|c|}
  \hline
\multicolumn{3}{|c|}{ $M$= 212, $k=$1.65 } \\ \hline Jump &   ARL &
RL Disp, \\ \hline 0   &  440.32& 334.70  \\ \hline 0.1 &  234.27&
166.92   \\ \hline 0.25&  101.26&  60.62  \\ \hline 0.5 &   56.87&
32.41   \\ \hline 0.75&   41.30&  23.18  \\ \hline 1   &   33.77&
18.76   \\ \hline 1.25&   29.38&  16.24  \\ \hline 1.5 &   26.92&
14.81   \\ \hline 1.75&   25.38&  13.93  \\ \hline 2   &   24.57&
13.48   \\ \hline 2.25&   24.17&  13.19  \\ \hline 2.5 &   23.76&
13.00   \\ \hline 2.75&   23.70&  12.95  \\ \hline 3   &   23.70&
12.94   \\ \hline
\end{tabular}}\\[0,5cm]
\caption{Binary chart applied to observations with
 Gaussian errors. Chart tuned to in-control ARL about $435$.
 Moderate and long buffer length.}\label{G435Long}
\end{table}

\begin{table}[h]
\centering \mbox{
\begin{tabular}{|c|c|c|}
  \hline
\multicolumn{3}{|c|}{ $M$= 111, $k=$2.19 } \\ \hline Jump &   ARL &
RL Disp, \\ \hline 0   & 836.64& 370.04  \\ \hline 0.1 & 398.04&
170.64   \\ \hline 0.25& 122.34&  46.80    \\ \hline 0.5 & 57.56 &
19.21     \\ \hline 0.75& 41.71 &  13.77    \\ \hline 1   & 34.13 &
11.17     \\ \hline 1.25& 29.78 &   9.72    \\ \hline 1.5 & 27.12 &
8.84     \\ \hline 1.75& 25.66 &   8.35    \\ \hline 2   & 24.76 &
8.04     \\ \hline 2.25& 24.24 &   7.88    \\ \hline 2.5 & 24.00 &
7.78     \\ \hline 2.75& 23.97 &   7.78    \\ \hline 3   & 23.79 &
7.73     \\ \hline
\end{tabular}
\begin{tabular}{|c|c|c|}
  \hline
\multicolumn{3}{|c|}{ $M$= 131, $k=$1,84 } \\ \hline Jump &   ARL &
RL Disp, \\ \hline
0   &  841.83 & 370.73   \\ \hline 0.1 &  399.22 & 171.86    \\
\hline 0.25&  124.06 &  46.98    \\ \hline 0.5 &   57.63 &  19.18
\\ \hline 0.75&   42.10 &  13.83    \\ \hline 1   &   34.18 &  11.20
\\ \hline 1.25&   29.60 &   9.66    \\ \hline 1.5 &   27.18 &   8.84
\\ \hline 1.75&   25.61 &   8.32    \\ \hline 2   &   24.66 &   8.02
\\ \hline 2.25&   24.21 &   7.87    \\ \hline 2.5 &   23.96 &   7.77
\\ \hline 2.75&   23.78 &   7.74    \\ \hline 3   &   23.69 &   7.70
\\ \hline
\end{tabular}
\begin{tabular}{|c|c|c|}
  \hline
\multicolumn{3}{|c|}{ $M$= 453, $k=$1.35 } \\ \hline Jump &   ARL &
RL Disp, \\ \hline 0   & 840.02& 650.13\\ \hline 0.1 & 337.79&
226.64\\ \hline 0.25& 149.54 & 87.46\\ \hline 0.5  & 82.94 & 47.32\\
\hline 0.75 & 59.05 &
33.34\\ \hline 1    & 48.22 & 27.03\\ \hline 1.25 & 42.12 & 23.43\\
\hline 1.5  & 38.14 & 21.23\\ \hline 1.75 & 36.05 & 20.03\\ \hline 2
& 34.93 & 19.38\\ \hline 2.25 & 34.33 & 18.98\\ \hline 2.5  & 33.79
& 18.71\\ \hline 2.75 & 33.36 & 18.48\\ \hline 3    & 33.55 &
18.57\\ \hline
\end{tabular}}\\[0,5cm]
\caption{Binary chart applied to observations with
 Gaussian errors. Chart tuned to in-control ARL about $840$.
 Moderate and long buffer length.}\label{G840Long}
\end{table}

\begin{table}
\centering
\begin{tabular}{|c|c|c|}
  \hline
\multicolumn{3}{|c|}{ Laplace (DblExp)} \\ \hline
\multicolumn{3}{|c|}{ $M$= 40, $k=$2.22 } \\ \hline Jump &   ARL &
RL Disp, \\ \hline 0   &  437.69& 315.91  \\ \hline 0.1 &  191.35&
133.31  \\ \hline 0.25&   59.51&  37.02  \\ \hline 0.5 &   28.51&
15.02  \\ \hline 0.75&   22.04&  11.13  \\ \hline 1   &   19.33&
9.69  \\ \hline 1.25&   17.70&   8.84  \\ \hline 1.5 &   16.85& 8.37
\\ \hline 1.75&   16.15&   8.02  \\ \hline 2   &   15.78& 7.83  \\
\hline 2.25&   15.55&   7.69  \\ \hline 2.5 &   15.34& 7.59  \\
\hline 2.75&   15.19&   7.52  \\ \hline 3   &   15.07& 7.46  \\
\hline
\end{tabular}
\begin{tabular}{|c|c|c|}
  \hline
\multicolumn{3}{|c|}{Cauchy} \\ \hline \multicolumn{3}{|c|}{ $M$=
28, $k=$2.28 } \\ \hline Jump &   ARL    & RL Disp, \\ \hline
0    & 420.79 & 300.12   \\ \hline 0.1  & 334.82 & 240.04   \\
\hline 0.25 & 167.28 & 116.28   \\ \hline 0.5  &  64.17 &  41.76
\\ \hline 0.75 &  37.52 &  22.47   \\ \hline 1    &  27.27 &  15.21
\\ \hline 1.25 &  22.70 &  12.03   \\ \hline 1.5  &  20.51 &  10.58
\\ \hline 1.75 &  18.86 &   9.55   \\ \hline 2    &  17.93 &   9.00
\\ \hline 2.25 &  17.23 &   8.58   \\ \hline 2.5  &  16.67 &   8.28
\\ \hline 2.75 &  16.29 &   8.07   \\ \hline 3    &  15.98 &   7.90
\\ \hline
\end{tabular}\\[0,5cm]
\caption{Comparison of ARLs of the binary chart with in-control ARL
~ 435 when applied to non-Gaussian distributions.
 $30,000$ independent
simulation trials.}\label{nonG435}
\end{table}

\section*{Acknowledgments}

The authors thank anonymous referees for constructive remarks
which improved the presentation. Part of the paper was prepared
during a visit of A.~Steland at the Technical University of Wroc\l
aw. The work of E. Rafaj{\l}owicz was supported by a research
grant ranging from 2007 to 2009 from the Ministry of Science and
Higher Education of  Poland.

\appendix

\section{Proof of the main result}

Under the change-point model of Section~\ref{CPM} we are given an
array $ \{ Z_{Ni} : 1 \le i \le N, N \ge 1 \} $ of Bernoulli
variables with conditional expectations $ \E( Z_{Ni} | \calF_{N,i-1}
) = p_0 $ if $ 1 \le i < \trunc{N\vartheta} $, and $ \E( Z_{Ni} |
\calF_{N,i-1} ) = p_{N1} = p_0 + \Delta/\sqrt{N} $ if $ \trunc{N
\vartheta} \le i \le N $, $ N \ge 1 $.

\begin{theorem} (Durrett 2005, Theorem~7.3). \label{ThA} Suppose $ \{ X_{n,m} \} $ is a martingale difference array
with respect to $ \{ \calF_{n,m} \} $. Define
\[
  S_{n,k} = \sum_{i=1}^k X_{n,i}, \qquad V_{n,k} =
  \sum_{1 \le i \le k} \E( X^2_{n,i} | \calF_{n,i-1} ), \qquad 0
  \le k \le n.
\]
If
\begin{itemize}
  \item[(i)] $ V_{n,\trunc{nt}} \to t $ in probability for all $t
\in [0,1] $ and
  \item[(ii)] for all $ \varepsilon > 0 $, $ \sum_{m
\le n} \E( X^2_{n,m} \eins_{\{ |X_{n,m}| > \varepsilon \}} |
\calF_{n,m-1} ) \to 0 $ in probability, \end{itemize} then $
S_{n,\trunc{nt}} \Rightarrow B(t) $, where $B$ denotes a standard
Brownian motion.
\end{theorem}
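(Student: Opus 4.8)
The plan is to prove weak convergence of $ S_{n,\trunc{n\cdot}} $ in the Skorohod space $ D[0,1] $ by the classical two-part programme: first establish convergence of the finite-dimensional distributions (fidis) to those of Brownian motion, then establish tightness, and finally invoke the standard fact that a tight sequence whose fidis converge to those of $ B $ converges weakly to $ B $. Throughout, condition (i) supplies the variance structure while condition (ii), the conditional Lindeberg condition, both produces Gaussianity of increments and forces the limit to have continuous paths.

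\textbf{Step 1 (finite-dimensional distributions).} By the Cram\'er--Wold device it suffices, for any partition $ 0 = t_0 < t_1 < \cdots < t_d \le 1 $ and reals $ \lambda_1, \dots, \lambda_d $, to show that the characteristic function of $ \sum_{j=1}^d \lambda_j ( S_{n,\trunc{nt_j}} - S_{n,\trunc{nt_{j-1}}} ) $ converges to $ \exp( -\tfrac12 \sum_j \lambda_j^2 (t_j - t_{j-1}) ) $, i.e. that the block increments become independent centred Gaussians with the prescribed variances. I would do this by the characteristic-function (McLeish) method: set $ \mu_i = \lambda_j $ for $ i $ in the $ j $-th block and introduce the complex product $ T_{n,k} = \prod_{i=1}^{k} ( 1 + i \mu_i X_{n,i} ) $, which by the martingale-difference property is a mean-one martingale in $ k $. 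A Taylor expansion of $ \log( 1 + i\mu x ) $ shows that $ \log T_{n,\trunc{nt_d}} $ differs from $ i \sum_i \mu_i X_{n,i} + \tfrac12 \sum_i \mu_i^2 \E( X_{n,i}^2 | \calF_{n,i-1} ) $ by an asymptotically negligible remainder, the cubic terms being controlled by (ii) together with $ \max_i | X_{n,i} | \to 0 $ in probability. Condition (i) then replaces the quadratic-variation sum by its deterministic limit $ \sum_j \lambda_j^2 (t_j - t_{j-1}) $, and a standard estimate bounding the difference between $ \E \exp( i \sum_i \mu_i X_{n,i} ) $ and $ \E T_{n,\trunc{nt_d}} = 1 $ yields the desired Gaussian limit.

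\textbf{Step 2 (tightness).} First I would deduce from (ii), via a conditional Markov inequality, that the maximal jump $ \max_{m \le n} | X_{n,m} | \to 0 $ in probability, since $ \PP( \max_m | X_{n,m} | > \ve ) $ is bounded by the expectation of $ \ve^{-2} \sum_m \E( X_{n,m}^2 \eins_{\{ |X_{n,m}| > \ve \}} | \calF_{n,m-1} ) $, which tends to $ 0 $; this guarantees any weak limit has continuous paths. For the oscillation control I would exploit the martingale identity $ \E( ( S_{n,\trunc{nt}} - S_{n,\trunc{ns}} )^2 | \calF_{n,\trunc{ns}} ) = \E( V_{n,\trunc{nt}} - V_{n,\trunc{ns}} | \calF_{n,\trunc{ns}} ) $, so that increments of $ S $ are dominated by increments of the predictable quadratic variation $ V $. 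Since (i) identifies the limit of $ V_{n,\trunc{n\cdot}} $ as the continuous deterministic function $ t $, an Aldous/Rebolledo-type tightness criterion -- or the Billingsley moment criterion applied after a stopping-time truncation at the first time $ V_{n,k} $ exceeds a fixed level, which renders $ V $ uniformly bounded -- then gives tightness in $ D[0,1] $.

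\textbf{Step 3 and main obstacle.} Combining Steps 1 and 2, every weak limit lies in $ C[0,1] $, has independent increments, and its increment over $ [s,t] $ is $ N(0, t-s) $; these properties characterize standard Brownian motion, so $ S_{n,\trunc{n\cdot}} \Rightarrow B $. The main obstacle is Step 2: condition (i) provides only pointwise-in-$ t $ convergence in probability of the quadratic variation, not a uniform or functional statement, whereas tightness demands control of increments across all pairs $ (s,t) $ simultaneously. Bridging this gap is exactly where the martingale structure is indispensable -- the conditional-variance identity together with optional-stopping and truncation arguments converts the weak pointwise control of $ V $ into the uniform modulus-of-continuity estimate required for tightness, while (ii) simultaneously eliminates the jumps that would otherwise obstruct convergence to a continuous limit.
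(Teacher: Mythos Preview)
The paper does not prove this statement at all: Theorem~A.1 is simply quoted from Durrett's textbook and then invoked as a black box in the proof of Theorem~3.1. There is therefore no ``paper's own proof'' to compare against; the authors treat the martingale functional CLT as a known external result.

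Your outline is a legitimate route to the result and follows one of the standard programmes (McLeish's characteristic-function martingale for the fidis, an Aldous/Rebolledo-type criterion for tightness). Two remarks on the sketch itself. First, in Step~2 you flag as the main obstacle that condition~(i) gives only pointwise convergence of $V_{n,\trunc{n\cdot}}$, whereas tightness needs uniform control; the missing observation that resolves this cleanly is that $k\mapsto V_{n,k}$ is \emph{non-decreasing} and the limit $t\mapsto t$ is continuous, so a Polya-type argument upgrades pointwise convergence in probability to uniform convergence in probability on $[0,1]$ without any stopping-time machinery. Second, in Step~1 the McLeish product $T_{n,k}$ is a mean-one martingale, but to pass from $\E T_{n,\trunc{nt_d}}=1$ to the limiting characteristic function you also need uniform integrability of $T_{n,\trunc{nt_d}}$, which you do not mention; this is where the Lindeberg condition~(ii) is used a second time, via the bound $|T_{n,k}|\le\prod_i(1+\mu_i^2 X_{n,i}^2)^{1/2}$ together with $\sum_i X_{n,i}^2\to$ constant in probability. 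With these two points filled in, the argument goes through; Durrett's own proof in the cited reference proceeds somewhat differently, reducing first to the case of uniformly bounded increments by truncation and then using a more direct Lindeberg--Feller computation block by block.
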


\begin{proof} (of Theorem~3.2) We first consider the case when there is no change.
Let us introduce the partial sum process,
\[
  Z_N(t) = \sum_{i=1}^{\trunc{Nt}} \xi_{Ni}, \qquad t \in [0,1],
\]
where $ \xi_{Ni} = (Z_{Ni} - p_0)/\sqrt{N p_0(1-p_0)}
 $, $ 1 \le i \le N $. Let us first verify that the array $ \{ \xi_{Ni} : 1 \le i \le N,
N \ge 1 \} $ satisfies the assumptions of Theorem~\ref{ThA}.
Clearly, $ E( \xi_{Ni} | \calF_{N,i-1} ) = 0 $ and
\[
  E( \xi_{Ni}^2 | \calF_{N,i-1} ) = \Var( \xi_{Ni} | \calF_{N,i-1}
  ) = N^{-1},
\]
for all $ 1 \le i \le N $, yielding
\[
  V_{N,\trunc{Nt}} = \sum_{i=1}^{\trunc{Nt}} E( \xi_{Ni}^2 |
  \calF_{N,i-1} ) = \frac{\trunc{Nt}}{N} \to t,
\]
as $ N \to \infty $. The conditional Lindeberg condition is shown as
follows. Since $ E( (Z_{Ni}-p_0)^2 | \calF_{N,i-1} ) \le 1 $, $ 1
\le i \le N $, we obtain for any $ \varepsilon > 0 $
\begin{align*}
  L_N(\varepsilon) &= \sum_{i=1}^N E( \xi_{Ni}^2 \eins( | \xi_{Ni} | > \varepsilon ) | \calF_{N,i-1} )  \\
    & \frac{1}{N} \sum_{i=1}^N E \left( \frac{ (Z_{Ni}-p_0)^2 }{
    p_0(1-p_0) } \eins \left( \frac{ |Z_{Ni} - p_0| }{\sqrt{p_0(1-p_0)}} >
    \varepsilon \sqrt{N} \right) \bigg| \calF_{N,i-1} \right) \\
    & \le \frac{1}{Np_0(1-p_0)} \sum_{i=1}^N P\left( \frac{|Z_{Ni} -
    p_0| }{ \sqrt{p_0(1-p_0)} } > \varepsilon \sqrt{N} \bigg|
    \calF_{N,i-1} \right).
\end{align*}
The conditional Markov inequality yields for $ 1 \le i \le N $
\[
 P\left( \frac{|Z_{Ni} -
    p_0| }{ \sqrt{p_0(1-p_0)} } > \varepsilon \sqrt{N} \bigg|
    \calF_{N,i-1} \right)
 \le
  \frac{1}{\varepsilon^2 N},
\]
which implies
\[
  \lim_{N \to \infty} L_N( \varepsilon ) = 0.
\]
Hence, by Theorem~\ref{ThA}
\[
  Z_N \Rightarrow B, \qquad  N \to \infty.
\]
Now, as will be shown below for a more involved setting,
\begin{align*}
  J_N(t) & = \sqrt{p_0(1-p_0)}\left[ Z_N( t - \frac{1}{N} ) - Z_N(
  t - \frac{M_{\trunc{Nt}}}{N} - \frac{1}{N} ) \right] \\
   & \Rightarrow \eta_0[ B(t) - B(t-M(t)) ],
\end{align*}
as $ N \to \infty $. Having in mind the rule (\ref{StopRule}), we
conclude
\[
  \calJ_N(t) - k \sqrt{ M_{\trunc{Nt} } N^{-1} p_0(1-p_0) } \Rightarrow
  \eta_0[ B( t ) - B( t - M(t) ) ] - k \sqrt{ M(t) } \eta_0, \qquad N \to \infty,
\]
which yields
\[
  S_N/N \stackrel{d}{\to} \inf \{ s \in (0,1] : B(t) - B(t-M(t))  > k
  \sqrt{M(s)} \},
\]
as $ N \to \infty $.

To establish (ii), we consider three cases.

Case 1: $ \trunc{Nt} \le \trunc{N \vartheta} $ is handled as
above.

Case 2: $ \trunc{N \vartheta} < \trunc{Nt} < \trunc{N\vartheta} +
M_{\trunc{Nt}} $. Denote the set of corresponding values of $t$ by
$ \calT_2 $. $ \calJ_N(t) $ equals
\begin{equation}
\label{Decomp}
  \frac{1}{\sqrt{N}} \sum_{i=\trunc{Nt}-M_{\trunc{Nt}}}^{ \trunc{N
  \vartheta}-1} (Z_i-p_0) + \frac{1}{\sqrt{N}}
  \sum_{i=\trunc{N\vartheta}}^{\trunc{Nt}-1} (Z_{Ni}-p_{N1}) +
  \frac{1}{\sqrt{N}} \sum_{i=\trunc{N\vartheta}}^{\trunc{Nt}-1}
  (p_{N1} - p_0).
\end{equation}
Since $ p_1-p_0 = \Delta/\sqrt{N} $, the third term converges
(pointwise) to the continuous function $ \Delta (t-\vartheta) $,
which implies that the convergence is also uniform in $ t \in
[\vartheta, \vartheta+M(t)] $. To handle the random terms put
  \[
    \widetilde{\xi}_{Ni} = \left\{
    \begin{array}{cc}
      ( Z_i - p_0 ) / \sqrt{ p_0(1-p_0)N }, \qquad 0 \le i \le
      \trunc{N \vartheta} - 1, \\
      ( Z_{Ni} - p_{N1} ) / \sqrt{ p_{N1}( 1 - p_{N1} ) N}, \qquad
      \trunc{N \vartheta} \le i \le N.
    \end{array} \right.
    \]
    Again, the conditions of the functional martingale central limit theorem are
    satisfied, such that $ \widetilde{Z}_N(t) = \sum_{i=1}^{\trunc{Nt}}
    \widetilde{\xi}_{Ni} \Rightarrow B(t) $. The first and second
    term in (\ref{Decomp}) are now given by
    \begin{align*}
     & \sqrt{ p_0(1-p_0) }
      \biggl[ \widetilde{Z}_N( \vartheta - \frac{1}{N} ) - \widetilde{Z}_N( t - \frac{M_{\trunc{Nt}}}{N} - \frac{1}{N} )
      \biggr] \\
     & \qquad \quad
        + \sqrt{ p_{N1} (1-p_{N1}) } \biggl[ \widetilde{Z}_N( t - \frac{1}{N} ) - \widetilde{Z}_N( \vartheta - \frac{1}{N} ) \biggr],
    \end{align*}
    which equals $ \varphi_N( \widetilde{Z}_N )(t) $, if we define the sequence of functionals $ \varphi_N : (D[0,1],d) \to (D[0,1],d) $, $ N \ge 1 $,
    by
    \begin{align*}
      \varphi_N(z)(t) &=
      \sqrt{ p_0(1-p_0) } \biggl[z(\vartheta-1/N) - z(t - \frac{ M_{\trunc{Nt}} }{N} - \frac{1}{N} )
      \biggr] \\
      & \qquad
        + \sqrt{ p_{N1}(1-p_{N1}) } \biggl[ z(t - \frac{1}{N}) - z( \vartheta - \frac{1}{N} )
        \biggr].
    \end{align*}
    Also define
    \[
      \varphi(z) = \sqrt{p_0(1 - p_0)}[ z(t) - z( t - M(t) ) ],
      \qquad z \in C[0,1].
    \]
    By linearity, $ \varphi_N $ is uniformly Lipschitz continuous,
    i.e.,
    \[
      \sup_{N \ge 1} \| \varphi_N(z_1) - \varphi_N(z_2) \|_\infty \le L \| z_1 - z_2
      \|_\infty,
    \]
    for all $ z_1, z_2 \in D[0,1] $, where $ L = 2 \sup_{N \ge 1} \sqrt{ p_{N1}(1-p_{N1})
    } < \infty $. Further, since any $ z \in
    C[0,1] $ is uniformly continuous,
    \[
      \| \varphi_N(z) - \varphi(z) \|_\infty \to 0, \qquad N \to
      \infty.
    \]
    Let $ \{ z, z_N \} \subset D[0,1] $ be a sequence with
    $ z_N \to z \in C[0,1]
    $ in the Skorohod metric, which implies $ \| z_N - z \|_\infty
    \to 0 $. Apply the triangle inequality to obtain
    \[
      \| \varphi_N( z_N ) - \varphi(z) \|_\infty \le \| \varphi_N(z_N) -
      \varphi_N(z) \|_\infty + \| \varphi_N( z ) - \varphi( z )
      \|_\infty.
    \]
    The first term is bounded by $ L \| z_N - z \|_\infty \to 0$, $N \to \infty$, and the second one
   tends to $0$ by the uniform Lipschitz continuity. For $ z \in
    C[0,1] $ we have $ \varphi(z)(t) = \sqrt{p_0(1-p_0)} [z(t)
    -z(t-M(t)) ] $. Due to the Shorohod/Dudley/Wichura
    representation theorem, $ \widetilde{Z}_N \Rightarrow B $, $ N \to \infty $, implies that
    there exists a probability space and equivalent version of $ \widetilde{Z}_N $ and $ B $ defined on that new space, which
    we again denote by $ \widetilde{Z}_N $ and $B$, such that $ \| \widetilde{Z}_N - B
    \|_\infty \to 0 $, $N \to \infty $, a.s. The above arguments ensure that
    \[
    \varphi_N(\widetilde{Z}_N)(t) \Rightarrow \varphi(B)(t) = \eta_0 [ B(t) - B( t - M(t) ) ],
    \]
    as $ N \to \infty $.

Case 3: $ \trunc{N\vartheta} + M_{\trunc{Nt}} \le t $ is obvious.

Putting things together yields the result for $ \calJ_N(t) $.
Since the process $ \calJ_M^{(1)} $ is a.s. continuous, we may
further conclude that
\[
  S_N/N \stackrel{d}{\to} \tau_M^{(1)} = \inf \{ t \in [0,1] : \calJ^{(1)}_M(t) > k \sqrt{M(t)} \eta_0 \},
\]
as $ N \to \infty $.
\end{proof}


\begin{thebibliography}{}

\bibitem[\protect\citeauthoryear{Antoch and Jaru\v{s}kov\'a}{2002}]{AnHu02a}
\textsc{Antoch J. and Jaru\v{s}kov\'a M.} (2002).
  On-line statistical process control,
  in: \textit{Multivariate Total Quality Control, Foundations and Recent Advances},
  ed. Lauro C., Antoch J., and Vinzi, V.E., Physica, Heidelberg.

\bibitem[\protect\citeauthoryear{Antoch, Hu\v{s}kov\'a M., and Jaru\v{s}kov\'a}{2002}]{AnHu02b}
\textsc{Antoch J., Hu\v{s}kov\'a M., and Jaru\v{s}kov\'a M.} (2002).
  Off-line statistical process control,
  in: \textit{Multivariate Total Quality Control, Foundations and Recent Advances},
  ed. Lauro C., Antoch J., and Vinzi, V.E., Physica, Heidelberg.


\bibitem[\protect\citeauthoryear{Billingsley}{1991}]{Bil91}
\textsc{Billingsley, P.} \textit{Weak Convergence of Probability
Measures}, $2$nd ed., New York: Wiley, 1991.

\bibitem[\protect\citeauthoryear{Brodsky and Darkhovsky}{2000}]{BD}
\textsc{Brodsky, B.E., and Darkhovsky, B.S.} (2000).
\textit{Non-Parametric Statistical Diagnosis: Problems and Methods}.
Dordrecht: Kluwer Academic Publishers.


\bibitem[\protect\citeauthoryear{Chakraborti, van der Laan, and Bakir}{2001}]{nonparchartrev}
\textsc{Chakraborti S., van der Laan P., and Bakir S.T.} (2001).
Nonparametric control charts: An overview and some results,
\textit{J. Qual. Tech.}, vol. 33, 304-315.

\bibitem[\protect\citeauthoryear{Durrett}{2005}]{Dur05}
\textsc{Durrett, R.} \textit{Probability: Theory and Examples}, 3rd
ed., Belmont: Brooks/Cole Thomson - Learning.


\bibitem[\protect\citeauthoryear{Han and Tsung}{2004}]{AnnStat2004}
\textsc{Han D., Tsung F.} (2004). A generalized EWMA control chart
and its comparison with the optimal EWMA, CUSUM and GLR schemes.
\textit{Ann. of  Stat.}, 32, 316--340.



\bibitem[\protect\citeauthoryear{Margavio et al.}{1995}]{AlarmRates}
\textsc{Margavio T.M., Conerly M.D., Woodall W.H., and Drake L.G.}
(1995). Alarm rates for quality control charts, \textit{Statist.
Probab. Lett.}, vol. 24, 219-224.

\bibitem[\protect\citeauthoryear{Montgomery}{2001}]{Montgomery}
\textsc{Montgomery~D.~C.} (2001). \textit{Introduction to
Statistical Quality Control}. 4th Edition, New York: Wiley.

\bibitem[\protect\citeauthoryear{Munford}{1980}]{CuScore}
\textsc{Munford A.G.} (1980). A control chart based on cumulative
scores, \textit{Appl. Stat.}, vol. 29, 252-258.

\bibitem[\protect\citeauthoryear{Pawlak and Rafaj\l owicz}{1999}]{VWRRaf99}
\textsc{Pawlak, M. and Rafaj\l owicz, E.} (1999).
  Vertically weighted regression - a tool for nonlinear data
  analysis and constructing control charts.
  \textit{J. German Statistical Association},  vol. 84, 367-388.

\bibitem[\protect\citeauthoryear{Pawlak, Rafaj\l owicz, and Steland}{2004}]{VWR_PRS04}
\textsc{Pawlak, M., Rafaj\l owicz, E., and Steland A.} (2004).
  On detecting jumps in time series - Nonparametric setting.
  \textit{J. Nonparametr. Stat.}, vol. 16, 329-347.

\bibitem[\protect\citeauthoryear{Pawlak, Rafaj\l owicz, and Steland}{2008}]{VWR_PRS08}
\bysame{Pawlak, M. and Rafaj\l owicz, E., and Steland A.} (2008).
  Nonlinear image processing and filtering: A unified approach
  based on vertically weighted regression. {\em Intern.
  J. of Appl. Math. Comput. Sci.}, vol. 18, 1,
  49-61.


\bibitem[\protect\citeauthoryear{Shorack}{2000}]{Sho00}
\textsc{Shorack, G.R.} (2000). \textit{Probability for
Statisticians}, New York: Springer.

\bibitem[\protect\citeauthoryear{Steland}{2004}]{Ste04Metrika}
\textsc{Steland, A.} (2004). Sequential control of time series by
functionals of kernel-weighted empirical processes under local
alternatives. {\em Metrika}, 60, 229-249.

\bibitem[\protect\citeauthoryear{Steland}{2005}]{Ste05KernelSmoothers}
\bysame{Steland, A.} (2005). Optimal sequential kernel smoothers
under local nonparametric alternatives for dependent processes.
\textit{J. Stat. Planning and Inference}, 132,
131-147.

\bibitem[\protect\citeauthoryear{Steland}{2005}]{Ste05ClipMed}
\bysame{Steland, A.} (2005). On the distribution of the clipping
median under a mixture model. \textit{Statist. Probab. Lett.},
70 (1), 1-13.

\bibitem[\protect\citeauthoryear{Steland}{2008}]{Steland08}
\bysame{Steland, A.} (2008). Sequentially updated residuals and
detection of stationary errors in polynomial regression models.
\textit{Sequential Anal.}, forthcoming.





\bibitem[\protect\citeauthoryear{Woodall}{1997}]{AttribChRev}
\textsc{Woodall W.H.} (1997). Control charts based on attribute
data: Bibliography and review. \textit{J. Qual. Tech.},
vol. 29, pp 172-183.

\bibitem[\protect\citeauthoryear{Wu and Spedding}{2000}]{smallshift}
Wu Z. and Spedding T.A. A synthetic control chart for detecting
small shifts in the process mean, \textit{J. Qual. Tech.},
vol. 32, pp 32-38.


\end{thebibliography}
\end{document}